\newcommand{\objfunc}{\Phi}
\theoremstyle{plain}
\newtheorem{theorem}{Theorem}
\newtheorem{example}{Example}
\newtheorem{lemma}[theorem]{Lemma}
\newtheorem{corollary}[theorem]{Corollary}
\title{The Incremental Knapsack Problem \\ with Monotone Submodular All-or-Nothing Profits}
\author[1]{Federico D'Onofrio}
\author[2]{Yuri Faenza}
\author[3]{Lingyi Zhang}
\affil[1]{DIAG, Sapienza University of Rome}
\affil[2]{IEOR Department, Columbia University}
\affil[3]{Uber Technologies}
\begin{document}

\maketitle

\abstract{We study incremental knapsack problems with profits given by a special class of monotone submodular functions, that we dub \emph{all-or-nothing}. We show that these problems are not harder to approximate than a less general class of modular incremental knapsack problems, that have been investigated in the literature. We also show that certain extensions to more general submodular functions are APX-hard.}

\medskip

\noindent {\bf Keywords:} Incremental Knapsack; Submodular Functions; Matroids; PTAS.

\section{Introduction}\label{sec:intro}

Incremental knapsack models are discrete, multi-period extensions of the classical maximum knapsack problem. In such models, we are given a set $[n]=\{1,2,\dots,n\}$ of items with nonnegative weights $w_1,\dots, w_n$, and $T$ capacities $0\leq W_1\leq \dots \leq W_T$. At every time $t \in [T]$, we are allowed to insert items in -- but not remove them from -- the knapsack, as long as the total weight of items currently in the knapsack does not exceed the total capacity $W_t$. Formally, a solution is given by a \emph{chain} $(S_1,\dots, S_T)$, i.e., a family $S_1\subseteq S_2\subseteq \dots \subseteq S_T \subseteq [n]$. We say that a chain is \emph{feasible} if, for each $t \in [T]$, we have $w(S_t)\leq W_t$, where for a function $g$ indexed over a set $[n]$ and for $S\subseteq[n]$, we employ the standard notation $g(S)=\sum_{i \in S}g_i$. The goal is to maximize a profit function, the definition of which depends on the specific model under consideration. 

Incremental knapsack models arise in applications where available resources grow in a predictable manner, allowing a decision-maker to plan for an expansion of their portfolio over time. Consider, for instance, an investor whose budget increases over the course of the year, allowing them to enlarge their set of active investments; or a city council, that can build new infrastructures over the years as more money is collected through taxes and other  sources. We refer to~\citet{faenza2020approximation} and the references therein for details on applications of incremental knapsack problems.

To the best of our knowledge, all research in the area has focused so far on modular profits, i.e., on models where we aim at maximizing a function of the form \begin{equation}\label{eq:modular}\sum_{t \in [T]}\sum_{i \in S_t\setminus S_{t-1}} f(i,t),\end{equation} for some $f : [n] \times [T] \rightarrow \mathbb{Z}_+$ (assuming $S_0=\emptyset$). See the Related work paragraph for examples. In~\eqref{eq:modular}, the profit of an item only depends on whether (and at which time) it is inserted in the knapsack, and not on the other items that are concurrently in the knapsack. Such functions cannot therefore model more complex profits depending on set of items. 

In this paper, we initiate the study of incremental knapsack problems with non-modular profit functions, focusing in particular on certain submodular  functions. Submodular profits can model the presence of substitute goods: in the city council example mentioned above, consider two similar venues that can be built in nearby locations -- say, two playgrounds: the joint profit of building them both is less than the sum of the profits of building each one of them alone. From a theoretical standpoint, maximizing a nonnegative monotone submodular profit function subject to a (non-incremental) knapsack constraint is a classical and well-studied problem, and a tight $(1 - \frac{1}{e})$-approximation can be obtained by a combination of guessing and combinatorial greedy techniques, see~\cite{Sviridenko2004}. However, those techniques do not seem to be effective even if we restrict to modular incremental knapsack problems (see~\cite{aouad2020,Faenza2018APF,faenza2020approximation}), so the quest for tools that can  (approximately) solve submodular maximization under incremental knapsack constraints is open. 

In this work, we focus on the \emph{Monotone Submodular All-or-Nothing Incremental Knapsack problem} (\texttt{IK-AoN}), defined as follows. To every item $i \in [n]$ we associate a \emph{profit} $p_i \in \mathbb{N}=\{1,2,\dots\}$, while to every time $t \in [T]$, we associate a \emph{coefficient} $\Delta_t\in \mathbb{Z}_+ =\{0,1,2,\dots,\}$  and a \emph{capacity} $W_t \in \mathbb{N}$, with $W_1\leq W_2 \leq \dots \leq W_T$. We are moreover given an \emph{aggregation function} $\gamma : 2^{[n]}\rightarrow \mathbb{Z}_+$. The goal is to find a feasible chain $(S_1,S_2,\dots,S_T)$ that maximizes the profit function $\sum_{t \in T}\Delta_t \gamma(S_t)$. Hence, for $S\subseteq [n]$ we let $\Delta_t \gamma(S)$ be the profit gained at time $t$ if the set of items contained in the knapsack is exactly $S$. The function $\gamma$ satisfies the \emph{Monotone Submodular All-or-Nothing} property, i.e., $\gamma(\emptyset)=0$ and the following conditions hold:
\begin{enumerate}
    \item (\emph{Monotone Submodularity}) $\gamma$ is a  monotonically non-decreasing submodular function, that is, for $i \in [n]$ and  $S\subseteq T \subseteq [n]$, we have $\gamma(S\cup \{i\})-\gamma(S) \geq \gamma(T\cup \{i\})-\gamma(T)$.
    \item (\emph{All-or-Nothing Contribution}) for each $i \in [n]$ and $S\subseteq [n]$, we have $\gamma(S\cup \{i\}) - \gamma(S)\in \{0,p_i\}$; 
\end{enumerate} Hence, the addition of item $i$ in $S$ at time $t$ either realizes the ``full profit'' of $\Delta_t p_i$, or no profit at all. In practice, such a profit function models a scenario where $i$ is either a perfect substitute of some item in $S$, or it is not a substitute of any of them. In theory, \texttt{IK-AoN} subsumes interesting special cases, as we discuss next. 

\begin{example}[\texttt{IK}]\label{ex:linear}
Consider an \texttt{IK-AoN} instance, under the additional assumption that $\gamma$ is \emph{modular}, i.e., $\gamma(S)=p(S)$ for $S\subseteq [n]$. The resulting problem is known as \emph{Incremental Knapsack} (\texttt{IK}): it is strongly NP-hard (\citet{bienstock2013approximation}) and admits a PTAS (Polynomial-Time Approximation Scheme, see~\citet{aouad2020}).  
\end{example}

\begin{example}[Matroid rank profits]\label{ex:matroid}
Consider an \texttt{IK-AoN} instance, under the additional assumptions that $\Delta_t=1$ for every $t \in [T]$ and $\gamma$ is the rank function of a matroid. Hence, our goal is to find a family of sets $S_1\subseteq \dots \subseteq S_T$ so that $w(S_t)\leq W_t$ for every $t \in [T]$ and the sum of the ranks of $S_1,\dots, S_T$ is maximized. By a variation of the classical proof of optimality of the greedy algorithm to find an independent set of maximum weight in a matroid (see, for example,~\citet[Chapter 8]{cook2011combinatorial}), or by Algorithm~\ref{alg:IIK-AoN} from the present paper, one deduces that the optimal solution can be obtained with the following greedy procedure. Sort the items $[n]$ such that $w_1 \leq \dots \leq w_n$. For increasing values of $t \in [T]$, build $S_t$ as follows. First set $S_t=S_{t-1}$ (with $S_0=\emptyset$). Then, for increasing values of $i \in [n]$, let $S_t = S_t \cup \{i \}$ if $w(S_t \cup \{i \}) \leq W_t$ and $S_t \cup \{i \}$ is independent (in the classical matroid sense).
\end{example}

Example~\ref{ex:linear} and Example~\ref{ex:matroid} show that \texttt{IK-AoN} contains as special cases certain problems with modular profits, as well as problems with more combinatorial profit structures. 

\smallskip

\noindent {\bf Our contributions.} We define a \emph{modularization} of an instance ${\cal I}$ of \texttt{IK-AoN}, any \texttt{IK} instance that can be obtained as follows. Assume ${\cal I}$ has aggregation function $\gamma$. Drop some of the items, assuming (possibly after renaming) that $[n']=\{1,2,\dots,n'\}$ constitutes the set of remaining items; then define the \texttt{IK} instance with the same profits, weights, coefficients, and capacities, and modular aggregation function $\gamma'$ with $\gamma'(S) = p(S)$ for $S\subseteq [n']$. 
For a family ${\cal C}$ of instances of \texttt{IK-AoN}, we call its modularization the family $\overline {\cal C}$ of all modularizations of all instances from ${\cal C}$. 

As our main result we prove that, if we assume oracle access to the aggregation function, any family of \texttt{IK-AoN} instances is essentially not harder than its modularization. More formally, when denoting by $|{\cal I}|$ the input size of an instance ${\cal I}$, we show the following.

\begin{theorem}\label{thm:IK-AoN-approx}
Let ${\cal C}$ be a class of \texttt{IK-AoN} instances, $\alpha \in [0,1]$. Suppose there is an algorithm that, for each instance $\overline{\cal I}\in \overline {\cal C}$, outputs in time $q(|\overline{\cal I}|)$ an $\alpha$-approximated solution to $\overline{\cal I}$. Then there is an algorithm such that, for each instance ${\cal I} \in {\cal C}$ defined over $n$ items and with aggregation function $\gamma$, outputs in time $q(|{\cal I}|) + O(n\log n) + n \cdot \texttt{oracle}$ an $\alpha$-approximated solution to ${\cal I}$, where $\texttt{oracle}$ is the time required for an evaluation of the function $\gamma$.
\end{theorem}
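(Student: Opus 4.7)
The plan is to construct a single modularization of ${\cal I}$ via a weight-sorted greedy, then invoke the given $\alpha$-approximation as a black box. Concretely, sort the items by non-decreasing weight and scan in this order, building $R\subseteq[n]$ by appending each $i$ to $R$ iff $\gamma(R\cup\{i\})-\gamma(R) = p_i$. Let $\overline{\cal I}$ be the \texttt{IK} instance on item set $R$, inheriting the profits, weights, coefficients, and capacities of ${\cal I}$. This construction takes $O(n\log n)$ time for sorting plus $n$ oracle queries.

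I would first show by induction on $|S|$ that $\gamma(S)=p(S)$ for every $S\subseteq R$: letting $i\in S$ be the item added latest to $R$ by the greedy, submodularity gives that the marginal of $i$ at $S\setminus\{i\}$ is at least the marginal when $i$ joined the (smaller) running $R$, which was exactly $p_i$; AoN bounds this marginal from above by $p_i$; hence it equals $p_i$ and $\gamma(S) = p_i + p(S\setminus\{i\}) = p(S)$ by the inductive hypothesis. Consequently every feasible chain of $\overline{\cal I}$ is feasible for ${\cal I}$ with identical objective value, so $\mathrm{OPT}(\overline{\cal I})\leq \mathrm{OPT}({\cal I})$.

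The main step is the reverse inequality $\mathrm{OPT}(\overline{\cal I})\geq \mathrm{OPT}({\cal I})$. First I would establish that $\mathcal{J}:=\{I\subseteq[n]:\gamma(I)=p(I)\}$ is the collection of independent sets of a matroid $M$ on $[n]$: downward closure is immediate from AoN (otherwise some marginal in the buildup of $I$ would drop to $0$, contradicting $\gamma(I)=p(I)$), and the exchange axiom follows by showing, via submodularity plus AoN, that any configuration where exchange fails would force an AoN violation on some superset. Within this matroid, $R$ is a minimum $w$-weight basis; all bases of $M$ have the common $p$-weight $\gamma([n])$, which by a basis-swap argument implies any two elements in a common circuit of $M$ have equal $p$-value. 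Given an optimal chain $(S_1^*,\dots,S_T^*)$ for ${\cal I}$, inductively extract a chain $I_1\subseteq\dots\subseteq I_T$ in $\mathcal{J}$ with $I_t\subseteq S_t^*$ inclusion-maximal (extending $I_{t-1}$ greedily inside $S_t^*$), which yields $p(I_t)=\gamma(S_t^*)$ and $w(I_t)\leq W_t$. Then, via a Hall-type argument on the bipartite graph of fundamental circuits of $R$, construct injections $\phi_t:I_t\to R$ with $\phi_t(i)\in C_i$ (the fundamental circuit of $i$ w.r.t.\ $R$), coordinated across $t$ so that $R_t^*:=\phi_t(I_t)$ satisfies $R_{t-1}^*\subseteq R_t^*$. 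The min-weight basis property of $R$ gives $w(\phi_t(i))\leq w(i)$, and the uniform $p$-weight on circuits gives $p(\phi_t(i))=p(i)$; whence $(R_t^*)$ is a chain in $R$ with $w(R_t^*)\leq W_t$ and $p(R_t^*)=\gamma(S_t^*)$, as required.

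Finally, running the $\alpha$-approximation on $\overline{\cal I}$ returns a chain of value at least $\alpha\cdot\mathrm{OPT}(\overline{\cal I})\geq\alpha\cdot\mathrm{OPT}({\cal I})$, which by modularity on $R$ has the same $\gamma$-value, and is thus an $\alpha$-approximate solution to ${\cal I}$. The total running time is $q(|\overline{\cal I}|) + O(n\log n) + n\cdot\texttt{oracle}$, bounded by the claimed expression since $|\overline{\cal I}|\leq|{\cal I}|$. The main obstacle is the chain-coordinated injection construction: pointwise matroid exchange is routine, but ensuring the images $R_t^*$ nest across time steps requires an incremental Hall argument that exploits both the min-weight basis structure and the AoN-induced uniform $p$-weight on circuits.
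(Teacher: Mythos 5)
Your approach is genuinely different from the paper's, and its backbone is sound, but as written it has one real gap. The paper never works with a single matroid on all of $[n]$; it first proves that a set is independent iff its intersection with each \emph{profit class} $P^\ell$ is independent, then shows that independent sets within one profit class form a matroid, builds a min-weight basis $P_I^\ell$ per class, and replaces chains by a simple cardinality argument inside each class. Your route instead posits a global matroid $M=([n],\mathcal J)$, uses a min-weight basis $R$ of $M$, and injects through fundamental circuits. These do lead to the same algorithm (your weight-sorted greedy and the paper's per-class greedy output the same $R=\bigcup_\ell P_I^\ell$, since $M$ is precisely the direct sum of the per-class matroids), and the fundamental-circuit injection is a legitimate alternative to the paper's cardinality replacement. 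One small but worthwhile simplification on your side: the ``chain-coordinated'' Hall argument you flag as the main obstacle is unnecessary. Build a single injection $\phi:I_T\to R$ with $\phi(i)\in C_i$ (Hall's condition for $A\subseteq I_T$ holds because $\bigcup_{i\in A}(C_i\cap R)$ spans the independent set $A$, hence has size $\geq|A|$), and set $R_t^*:=\phi(I_t)$; nesting is then automatic, so no incremental matching extension is needed.

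The genuine gap is the verification that $\mathcal J$ is a matroid, specifically the exchange axiom. The argument you sketch -- ``any configuration where exchange fails would force an AoN violation on some superset'' -- does not go through directly. If $S,S'\in\mathcal J$ with $|S|<|S'|$ and $S\cup\{j\}$ is dependent for every $j\in S'\setminus S$, then by AoN each such marginal is $0$, so submodularity gives $\gamma(S\cup S')=\gamma(S)=p(S)$; combined with $\gamma(S\cup S')\geq\gamma(S')=p(S')$ this only yields $p(S)\geq p(S')$, which is not in tension with $|S|<|S'|$ when profits differ. What rescues the exchange axiom is precisely the paper's structural observation (Lemma~\ref{lem:union_of_ind}) that every cycle lies entirely within one profit class: then the failure of exchange restricts to a single profit class, where all profits are equal and $p(S\cap P^\ell)\geq p(S'\cap P^\ell)$ does contradict $|S\cap P^\ell|<|S'\cap P^\ell|$. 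So your global matroid claim is true, but the proof you would need is essentially the paper's per-class decomposition in disguise; the ``direct'' submodularity-plus-AoN argument you gesture at is insufficient. Once you establish that, the rest of your plan (uniform profit on circuits via a basis swap, $w(\phi(i))\leq w_i$ from minimality of $R$, the restriction-of-$\phi$ chain) is correct.
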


Theorem~\ref{thm:IK-AoN-approx} and the PTAS for \texttt{IK} by~\citet{aouad2020} imply the following.

\begin{corollary}\label{cor:PTAS-IIK-AON}
When $\gamma$ can be evaluated in time polynomial in the input size, \texttt{IK-AoN} has a PTAS.
\end{corollary}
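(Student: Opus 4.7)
The plan is to apply Theorem~\ref{thm:IK-AoN-approx} directly, using the PTAS of~\citet{aouad2020} as the approximation algorithm for \texttt{IK} instances. First, I would fix an arbitrary $\epsilon > 0$ and let $\mathcal{A}_\epsilon$ denote a $(1-\epsilon)$-approximation algorithm for \texttt{IK} running in time $q_\epsilon(\cdot)$ polynomial in the input size; such an algorithm exists by the PTAS guarantee. Next, I would observe that a modularization is obtained by discarding items and replacing $\gamma$ by the modular function $\gamma'(S)=p(S)$, whose description is already contained in the original input via the profits $p_i$. Hence for any \texttt{IK-AoN} instance $\mathcal{I}$ and any modularization $\overline{\mathcal{I}}$ derived from it, one has $|\overline{\mathcal{I}}| \leq |\mathcal{I}|$, so $\mathcal{A}_\epsilon$ runs in time $q_\epsilon(|\mathcal{I}|)$ on every modularization in $\overline{\mathcal{C}}$.

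Now I would invoke Theorem~\ref{thm:IK-AoN-approx} with $\alpha = 1 - \epsilon$ and the algorithm $\mathcal{A}_\epsilon$ applied to the class $\mathcal{C}$ of all \texttt{IK-AoN} instances whose aggregation function admits a polynomial-time evaluation oracle. The theorem yields a $(1-\epsilon)$-approximate solution to $\mathcal{I}$ in time $q_\epsilon(|\mathcal{I}|) + O(n\log n) + n \cdot \texttt{oracle}$, which is polynomial in $|\mathcal{I}|$ under the standing assumption on $\gamma$. Since $\epsilon > 0$ was arbitrary, this establishes a PTAS for \texttt{IK-AoN}. There is no real obstacle in this argument: the entire technical content has already been absorbed into Theorem~\ref{thm:IK-AoN-approx}, and the only piece of bookkeeping required is the routine verification that the input size of a modularization never exceeds that of the instance from which it was derived.
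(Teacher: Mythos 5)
Your argument is exactly the paper's: the corollary is deduced by plugging the PTAS of Aouad and Segev for \texttt{IK} into Theorem~\ref{thm:IK-AoN-approx} with $\alpha = 1 - \epsilon$, and the running time stays polynomial because $\gamma$ has a polynomial-time evaluation oracle. Your added remark that a modularization never has larger input size than the original instance is a harmless bit of bookkeeping already absorbed in the theorem's time bound.
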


Moreover, if one aims at practical algorithms for \texttt{IK-AoN} that also have good (though suboptimal) theoretical performance guarantees, then Theorem~\ref{thm:IK-AoN-approx} can be employed by using as a subroutine recent algorithms for \texttt{IK} which have been proved to run fast also on instances of very large size, where Gurobi cannot even output a feasible solution or solve a natural LP relaxation, see~\cite{zhang2022incremental}. 

On the flip side, Theorem~\ref{thm:IK-AoN-approx} implies that the (APX-)hard incremental knapsack problems with submodular, nonnegative profits lie outside the class of monotone submodular all-or-nothing functions (note that such APX-hard instances do exist, because of the APX-hardness of submodular function maximization under a knapsack constraint, see~\cite{feige1998threshold}). As our second result, we show that slightly modifying the assumptions on $\gamma$ in the definition of \texttt{IK-AoN} leads to an APX-hard problem. More formally, define \texttt{IK-$\{0,1,2,3\}$} by replacing condition 2 in the definition of \texttt{IK-AoN} with: 
\begin{itemize}
    \item[2'.] (\emph{$\{0,1,2,3\}$-Contribution})
for each $i \in [n]$ and $S\subseteq [n]$, we have $\gamma(S\cup \{i\})-\gamma(S) \in \{0,1,2,3\}$;\end{itemize}
and assuming that $\gamma$ can be evaluated in time polynomial in the input size.
\begin{restatable}{theorem}{mainhard}
    \label{thm:main:hard} \texttt{IK-$\{0,1,2,3\}$} is APX-hard.
\end{restatable}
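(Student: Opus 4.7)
The plan is to construct an approximation-preserving reduction from a known APX-hard coverage problem whose natural submodular profit has marginal contributions bounded by $3$. A natural candidate is \emph{Max $k$-Vertex Cover on cubic graphs}: given a $3$-regular graph $G=(V,E)$ and an integer $k$, find $S \subseteq V$ with $|S|=k$ maximizing the number of edges having at least one endpoint in $S$. This problem is APX-hard; an alternative source, should the cited result be in a less convenient form, is \emph{Max $k$-Coverage} restricted to set systems in which every set has size at most $3$.

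Given such an instance $(G,k)$, I would build an \texttt{IK-$\{0,1,2,3\}$} instance ${\cal I}$ with items $[n]:=V$, weights $w_v:=1$, a single time period $T:=1$ with $W_1:=k$ and $\Delta_1:=1$, per-item profits $p_v:=1$ (these are not referenced by condition~$2'$, so any positive integer is acceptable), and aggregation function
\[
\gamma(S) \;:=\; \bigl|\{e \in E : e \cap S \neq \emptyset\}\bigr|.
\]
The function $\gamma$ is the classical edge-coverage function of $G$: it satisfies $\gamma(\emptyset)=0$ and is monotonically non-decreasing and submodular, and for every $S \subseteq V$ and $v \in V$,
\[
\gamma(S \cup \{v\}) - \gamma(S) \;=\; \bigl|\{e \in E : v \in e,\; e \cap S = \emptyset\}\bigr| \;\le\; \deg_G(v) = 3,
\]
so $\gamma$ satisfies the $\{0,1,2,3\}$-Contribution condition and is polynomial-time evaluable, matching the assumption of the theorem.

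Under this construction the objective $\sum_{t\in[T]}\Delta_t\gamma(S_t)$ collapses to $\gamma(S_1)$ subject to $|S_1|\le k$. By monotonicity of $\gamma$ the optimum is attained at some $S_1$ with $|S_1|=k$, so ${\cal I}$ and $(G,k)$ have the same optimal value and the same feasible solutions, and any $\alpha$-approximate solution for ${\cal I}$ transfers verbatim to an $\alpha$-approximate solution for $(G,k)$. This strict, value-preserving reduction carries APX-hardness over to \texttt{IK-$\{0,1,2,3\}$}.

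The step I expect to be the main obstacle is pinning down a clean, citable APX-hardness statement for a source problem whose natural submodular objective has marginals \emph{pre-bounded} by $3$. If the sharpest available reference for Max $k$-Vertex Cover produces only max-degree $d \ge 4$, a fallback is to reduce instead from \emph{Max-E3-SAT with bounded occurrences} via standard gadget reductions to obtain a $3$-uniform Max $k$-Coverage instance, and to run the identical construction with items being the sets; the marginal bound of $3$ then comes from the set-size bound rather than the degree bound. Once the source APX-hardness is fixed, the remainder of the argument is completely routine.
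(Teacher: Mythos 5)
Your construction is essentially identical to the paper's: both reduce from Max $k$-Vertex Cover on bounded-degree-$3$ graphs via $T=1$, unit weights, $W_1=k$, $\Delta_1=1$, and the edge-coverage function as $\gamma$, with the same verification of monotone submodularity and the $\{0,1,2,3\}$-contribution bound and the same value-preserving correspondence between solutions. The paper resolves the citation concern you flag by reducing from Max $k$-Vertex Cover on \emph{subcubic} graphs, whose APX-hardness is due to \citet{petrank1994hardness}, so your fallback gadgetry is unnecessary.
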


\paragraph{Additional Notation.} 
For a chain ${\cal S}=(S_1,\dots, S_T)$ and a set $Q \subseteq [n]$, we write ${\cal S}\subseteq Q$ to denote $S_T \subseteq Q$. Moreover, for $i \in S_T$, we let the \emph{insertion time of $i$} (with respect to ${\cal S}$) to be the smallest $t \in \mathbb{N}$ such that $i \in S_t$.

\paragraph{Related work.}
We have already introduced \texttt{IK} and mentioned that it has a PTAS, based on approximate dynamic programming ideas (\cite{aouad2020}). A relevant special case of \texttt{IK} is the \emph{time-Invariant Incremental Knapsack} (\texttt{IIK}), obtained by setting $\Delta_t=1$ for all $t\in [T]$. \texttt{IIK} is also strongly NP-hard (\cite{bienstock2013approximation}). A PTAS for \texttt{IIK} can be obtained by a combination of guessing, disjunctive programming, and LP rounding (\cite{Faenza2018APF}). A more general problem than \texttt{IK} is the \emph{Generalized Incremental Knapsack problem} (\texttt{GIK}), obtained by letting the objective function be as in~\eqref{eq:modular}. \texttt{GIK} admits an $(\frac{1}{2}-\epsilon)$-approximation and a QPTAS (\cite{faenza2020approximation}), based on a reformulation as a sequencing problem, dynamic programming ideas, and the Shmoys-Tardos algorithm for the generalized assignment problem, among other tools. Note that \texttt{GIK} and \texttt{IK-AoN} are incomparable. More work on incremental knapsack problems has appeared in~\cite{della2018approximating,della2019approximating,zhang2022incremental}.

\paragraph{Organization of the paper.} In Section~\ref{sec:intro:hlw}, we give an outline of the proof of Theorem~\ref{thm:IK-AoN-approx}, together with the corresponding algorithm. A special family of sets, called independent (see Section~\ref{sec:intro:hlw} for a definition) that are crucial for the proof of Theorem~\ref{thm:IK-AoN-approx} are investigated in Section~\ref{sec:ind} and Section~\ref{sec:profit-decomp}. Section~\ref{sec:wrap-up} concludes the proof of Theorem~\ref{thm:IK-AoN-approx}. Section~\ref{sec:hardness} presents the APX-hardness proof for~\texttt{IK-$\{0,1,2,3\}$}.

\section{Outline of the proof of Theorem~\ref{thm:IK-AoN-approx}}\label{sec:intro:hlw} 

We now present the main ideas behind the proof of Theorem~\ref{thm:IK-AoN-approx}. Proofs of the lemma introduced in this section and a formal proof of Theorem~\ref{thm:IK-AoN-approx} are given in later sections.

Fix an \texttt{IK-AoN} instance ${\cal I}$, defined as in Section~\ref{sec:intro}. We call $S\subseteq[n]$ \emph{independent} if $\gamma(S)=\sum_{i \in S} p_i$, \emph{dependent} otherwise. The name is inspired by the rank function $\rho$ of a matroid, for which $\rho(S)=|S|$ if and only if $S$ is independent. We can assume that all sets of the form $\{i\}$ for $i \in [n]$ are independent -- else, it is easy to see that by submodularity $\gamma(S)=\gamma(S\setminus \{i\})$ for all $S\subseteq [n]$, and we can consider the problem restricted to $[n]\setminus \{i\}$. 
Independent sets in our setting share with independent sets in the matroid setting classical properties, e.g., independence is preserved under taking subsets.

\begin{restatable}[Monotonicity of independence]{lemma}{indismonotone}\label{lem:ind-is-monotone} 
Let $S'\subseteq S\subseteq [n]$. If $S$ is independent, then $S'$ is independent.
\end{restatable}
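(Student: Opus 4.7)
The plan is to show independence descends to subsets by showing that, in an independent set $S$, every singleton marginal is ``full'', and then transporting this property to subsets via submodularity.

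First I would verify the following reformulation: a set $S$ is independent if and only if for every $i \in S$, $\gamma(S) - \gamma(S \setminus \{i\}) = p_i$. The ``only if'' direction follows by fixing any ordering $i_1, \dots, i_k$ of $S$ with a chosen $i \in S$ last, and writing $\gamma(S)$ as the telescoping sum of marginals $\gamma(\{i_1, \dots, i_j\}) - \gamma(\{i_1, \dots, i_{j-1}\})$. By the All-or-Nothing condition, each such marginal lies in $\{0, p_{i_j}\}$; since the sum equals $\sum_{j} p_{i_j}$ by independence of $S$, every marginal must attain its upper value $p_{i_j}$. In particular the last marginal equals $p_i$.

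Next I would use submodularity to deduce the same property for any $S' \subseteq S$. Fix $i \in S'$ and note $S' \setminus \{i\} \subseteq S \setminus \{i\}$. By submodularity applied to the element $i$ and the nested sets $S' \setminus \{i\} \subseteq S \setminus \{i\}$,
\[
\gamma(S') - \gamma(S' \setminus \{i\}) \;\geq\; \gamma(S) - \gamma(S \setminus \{i\}) \;=\; p_i,
\]
where the equality uses the previous step. Combining with the All-or-Nothing bound $\gamma(S') - \gamma(S' \setminus \{i\}) \in \{0, p_i\}$, equality with $p_i$ must hold.

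To finish, I would order $S' = \{j_1, \dots, j_\ell\}$ arbitrarily and telescope: each marginal $\gamma(\{j_1, \dots, j_r\}) - \gamma(\{j_1, \dots, j_{r-1}\})$ equals $p_{j_r}$ by the preceding paragraph applied to the independent subset $\{j_1, \dots, j_r\} \subseteq S$ (which is independent because the argument actually gives, inductively from top down, that every subset of $S$ satisfies the full-marginal property). Summing gives $\gamma(S') = \sum_{i \in S'} p_i$, so $S'$ is independent. The only mildly delicate point is keeping the direction of submodularity straight and noting that the All-or-Nothing property combined with a lower bound of $p_i$ forces equality; otherwise the argument is a clean combination of the two defining properties of $\gamma$.
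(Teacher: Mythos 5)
Your proof is correct, and it takes a genuinely different route from the paper's. The paper argues by contradiction using the cycle machinery it develops earlier in Section~\ref{sec:ind}: assuming a dependent $S'\subseteq S$, it extracts a cycle $C\subseteq S'$, uses Lemma~\ref{lem:gamma-of-cycle} to get $\gamma(C)-\gamma(C\setminus\{i\})=0$, and then submodularity yields $0\geq \gamma(S)-\gamma(S\setminus\{i\})=p_i$, a contradiction. Your argument is direct: you first extract the ``full singleton marginal'' characterization of independence by a telescoping sum (each increment is in $\{0,p_{i_j}\}$ by All-or-Nothing, and since the total is $p(S)$ each must hit its cap), then transport that property to any $S'\subseteq S$ via the submodularity inequality sandwiched against the All-or-Nothing upper bound, and finally telescope inside $S'$. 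Two remarks on the comparison. First, the two proofs share the same submodularity step (nested sets $S'\setminus\{i\}\subseteq S\setminus\{i\}$ versus $C\setminus\{i\}\subseteq S\setminus\{i\}$), so they are not as far apart as they might first appear; but your version avoids the cycle lemmas entirely, which is a nice economy if one wanted this lemma in isolation. The paper's route, however, amortizes well: cycles are reused in the proof of Lemma~\ref{lem:union_of_ind}, so the upfront investment pays off. Second, your proof has the incidental virtue of making fully explicit a step the paper glosses over --- the equality $\gamma(S)-\gamma(S\setminus\{i\})=p_i$ ``by independence of $S$'' is not literally the definition of independence; it requires exactly the telescoping-plus-All-or-Nothing observation you spell out in your first paragraph. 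One small stylistic note: your parenthetical remark about ``inductively from top down'' is unnecessary and slightly confusing. No induction is needed; the middle paragraph already establishes the full-marginal property for \emph{every} subset $S''\subseteq S$ simultaneously, and the final telescoping just applies it to the prefixes $\{j_1,\dots,j_r\}$.
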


We say that a chain ${\cal S}=(S_1,\dots, S_T)$ is \emph{independent} if $S_1,\dots, S_T$ are independent. Using Lemma~\ref{lem:ind-is-monotone}, the latter is equivalent to $S_T$ being independent. As our first step, we show that restricting to independent chains is enough to obtain an optimal solution.

\begin{restatable}[Optimality of independence]{lemma}{indopt} \label{lem:ind-opt}
There is an optimal chain of ${\cal I}$ that is independent. 
\end{restatable}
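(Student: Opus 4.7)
The plan is to select an optimal chain $\mathcal{S}=(S_1,\dots,S_T)$ whose terminal set $S_T$ has minimum cardinality among all optima, and to argue that this $S_T$ must be independent; Lemma~\ref{lem:ind-is-monotone} then upgrades independence of $S_T$ to independence of every $S_t\subseteq S_T$, yielding the claim.

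The bulk of the work is to derive a contradiction from the assumption that $S_T$ is dependent, by producing an optimal chain with a strictly smaller terminal set. First I would list the elements of $S_T$ as $i_1,\dots,i_k$ in nondecreasing order of insertion time with respect to $\mathcal{S}$, breaking ties arbitrarily within each time step; because items only accumulate, each $S_t$ then equals the prefix $\{i_1,\dots,i_{j_t}\}$ for some $j_t$. Using the telescoping identity
\[
\gamma(S_T)=\sum_{l=1}^{k}\bigl(\gamma(\{i_1,\dots,i_l\})-\gamma(\{i_1,\dots,i_{l-1}\})\bigr),
\]
the all-or-nothing property forces every summand into $\{0,p_{i_l}\}$, and dependence of $S_T$ (equivalently, $\gamma(S_T)<p(S_T)$) forces at least one vanishing summand; let $j^*$ be the smallest such index, so $i_{j^*}$ has marginal $0$ over the prefix $A=\{i_1,\dots,i_{j^*-1}\}$. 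The key step is to propagate this zero marginal to every level of the chain: for each $t$ with $i_{j^*}\in S_t$, apply submodularity to $A\subseteq S_t\setminus\{i_{j^*}\}$ with element $i_{j^*}$ to get $\gamma(S_t)-\gamma(S_t\setminus\{i_{j^*}\})\le \gamma(A\cup\{i_{j^*}\})-\gamma(A)=0$, and combine with monotonicity to conclude $\gamma(S_t)=\gamma(S_t\setminus\{i_{j^*}\})$. For $t$ with $i_{j^*}\notin S_t$ this equality is automatic. Thus the chain $\mathcal{S}'=(S_1\setminus\{i_{j^*}\},\dots,S_T\setminus\{i_{j^*}\})$ is still feasible (weights can only drop) and has the same objective $\sum_{t\in[T]}\Delta_t\gamma(S_t)$, while $|S_T\setminus\{i_{j^*}\}|<|S_T|$, contradicting our minimal choice of $\mathcal{S}$.

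The delicate step is the last propagation. The all-or-nothing condition alone only exhibits a single chain of prefixes along which some marginal vanishes, but to excise $i_{j^*}$ from the whole solution I need to know that removing it does not decrease $\gamma$ on any intermediate $S_t$, not only on $S_T$. This is where submodularity and the insertion-time ordering play together: ordering by insertion time makes each $S_t$ a prefix, so the witness $A$ sits inside every $S_t\setminus\{i_{j^*}\}$; submodularity then converts the zero marginal at $A$ into a nonpositive marginal of $i_{j^*}$ at every such superset, which monotonicity pins down to exactly zero. The rest of the argument is combinatorial bookkeeping to check feasibility of $\mathcal{S}'$ and to invoke Lemma~\ref{lem:ind-is-monotone}.
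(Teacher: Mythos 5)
Your proof is correct, and it takes a cleaner route than the paper's. The paper first normalizes the optimal chain so that every item realizes its full profit at the moment of insertion (removing items that do not), and then argues by contradiction with a carefully chosen minimal time $\tau(i)$: it isolates the block $\{j_1,\dots,j_h\}$ of items inserted simultaneously at $\tau(i)$ and runs two parallel telescoping identities (one including $i$, one with $i$ removed) to push the zero marginal of $i$ back one time step, contradicting minimality of $\tau(i)$. Your argument avoids this two-layer structure entirely: by ordering items by insertion time with ties broken arbitrarily, each $S_t$ becomes a prefix of a single linear order, so one telescoping along that order exposes a zero marginal (by all-or-nothing and dependence of $S_T$), and one application of submodularity against the small prefix $A$ propagates that zero marginal to every superset $S_t\setminus\{i_{j^*}\}$. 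The minimum-cardinality choice of optimal chain replaces the paper's ``remove WLOG'' normalization and makes the contradiction immediate. Both proofs exploit the same ingredients (submodularity to propagate a vanishing marginal, monotonicity to pin it to zero, the all-or-nothing dichotomy on prefix marginals), but your tie-breaking observation collapses the simultaneous-insertion bookkeeping that makes the paper's proof longer.
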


The previous lemma calls for an investigation of the structure of independent sets. Let $(P^1,\dots, P^k)$ be the \emph{profit partition} of $[n]$. That is, for $\ell \in [k]$, all items in the \emph{profit class} $P^{\ell}$ have profit $p^{\ell}$, with $0 < p^1 < p^2 < \dots < p^k$. We show the following.

\begin{restatable}[Independence of the union of independent slices]{lemma}{eqprofitdependent}\label{lem:union_of_ind}
Let $S\subseteq [n]$. Then $S$ is independent if and only if, for all $\ell \in [k]$, $P^\ell \cap S$ is independent. 
\end{restatable}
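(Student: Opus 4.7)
My plan is to dispatch the forward direction via Lemma~\ref{lem:ind-is-monotone} and to handle the reverse direction by a downward induction on profit class indices, reduced to a single merging sub-lemma. For the forward direction, $P^\ell \cap S \subseteq S$, so independence of $S$ immediately gives independence of $P^\ell \cap S$ for every $\ell$. For the reverse, assuming each $P^\ell \cap S$ is independent, I would show by induction on $\ell = k, k-1, \dots, 1$ that $U_\ell := \bigcup_{m \geq \ell}(P^m \cap S)$ is independent. The base case $\ell = k$ is given, and $U_1 = S$ yields the conclusion. The inductive step reduces to the following sub-lemma (applied with $A = U_{\ell+1}$ and $B = P^\ell \cap S$): if $A$ is independent with $A \subseteq \bigcup_{m > \ell} P^m$ and $B \subseteq P^\ell$ is independent, then $A \cup B$ is independent.

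The sub-lemma will be proved by a second induction, this time on $|B|$; the case $|B| = 0$ is trivial. For $|B| \geq 1$, pick $j \in B$. By the inner IH, $\gamma(A \cup (B \setminus \{j\})) = p(A) + (|B|-1) p^\ell$. By the all-or-nothing property, $\gamma(A \cup B) - \gamma(A \cup (B \setminus \{j\})) \in \{0, p^\ell\}$; if it equals $p^\ell$ we are done. Assuming instead that it equals $0$, I would derive a contradiction by recomputing $\gamma(A \cup B)$ via a different ordering: start from $B$ (which has $\gamma(B) = |B| p^\ell$ by independence) and add the elements of $A$ one at a time, each marginal lying in $\{0, p_i\}$. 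Letting $A_0 \subseteq A$ collect the items whose marginal vanishes in this process, one obtains $\gamma(A \cup B) = |B| p^\ell + p(A) - p(A_0)$. Equating the two expressions for $\gamma(A \cup B)$ forces $p(A_0) = p^\ell$. But every element of $A_0 \subseteq A$ has profit strictly greater than $p^\ell \geq 1$, so $p(A_0)$ is either $0$ (when $A_0 = \emptyset$) or strictly greater than $p^\ell$, a contradiction.

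The main obstacle I anticipate is the sub-lemma itself. A direct submodularity argument fails because marginals are non-increasing in the containing set: knowing that $j$ has marginal $p^\ell$ on the small set $(P^\ell \cap S) \setminus \{j\}$ does not prevent the marginal of $j$ on the larger $A \cup (B \setminus \{j\})$ from being $0$. The ordering trick circumvents this by ``blaming'' any missing profit on elements of $A$, converting the problem into a subset-sum constraint $p(A_0) = p^\ell$ that is ruled out by the strict separation of profits across classes together with $p^\ell \geq 1$.
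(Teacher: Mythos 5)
Your proof is correct, and it takes a genuinely different route from the paper's. The paper proves sufficiency by contrapositive: if $S$ is dependent, it extracts a cycle $C \subseteq S$ via Lemma~\ref{lem:cycle} and argues, by induction on $|S|$ using the identity $\gamma(S) = p(S\setminus\{i\})$ for all $i$ (Lemma~\ref{lem:gamma-of-cycle}, when $S$ is itself a cycle), that every cycle is monochromatic; hence a dependent $S$ contains a dependent subset lying inside a single $P^\ell$, which contradicts the hypothesis via Lemma~\ref{lem:ind-is-monotone}. You instead prove sufficiency directly by a downward induction over profit classes, reduced to a merging sub-lemma (independent $A$ over strictly higher profit classes plus independent $B\subseteq P^\ell$ is independent), which you settle by double-counting $\gamma(A\cup B)$: one telescoping order (inner IH plus one step) gives $p(A)+(|B|-1)p^\ell$ under the assumed failure, the other (adding $A$ to $B$ item by item, collecting the zero-marginal items into $A_0$) gives $|B|p^\ell+p(A)-p(A_0)$; equating forces $p(A_0)=p^\ell$, impossible since every element of $A_0$ has profit strictly exceeding $p^\ell\geq 1$. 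Both arguments ultimately rest on the same telescoping-marginal bookkeeping, but the paper localizes it inside a minimal dependent set (a cycle) and reuses cycle infrastructure already built for Lemma~\ref{lem:ind-is-monotone}, while you localize it at the boundary between profit classes; your route bypasses the cycle machinery for this lemma and makes more explicit why strict separation of profit values is the operative fact, at the cost of an extra merging sub-lemma with its own inner induction.
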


Because of the previous result, we next focus on understanding independent sets contained in each $P^\ell$, $\ell \in [k]$. It turns out that within each profit class, independent sets are very structured, as the next lemma shows. For any $\ell \in [k]$, let ${\cal M}_{\ell} \subseteq 2^{P^{\ell}}$ denote the family of independent sets contained in $P^{\ell}$. 

\begin{restatable}[Matroidal structure of independent sets in a profit class]{lemma}{clsingleprofitmatroid}\label{cl:single_profit_matroid}
For all $\ell \in [k]$, $(P^{\ell}, {\cal M}_{\ell})$ is a matroid.
\end{restatable}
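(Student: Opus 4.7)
\medskip
\noindent\textbf{Proof plan.} The plan is to verify the three standard axioms for $(P^{\ell}, \mathcal{M}_{\ell})$ to be a matroid: nonemptyness, downward closure, and the exchange axiom. Two of these are essentially free from what has already been established, and the whole content will lie in the exchange axiom, where the interplay between submodularity, the all-or-nothing property, and the fact that all items in $P^\ell$ share a common profit $p^\ell$ will be exploited.

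First, I observe that $\emptyset \in \mathcal{M}_{\ell}$ is immediate since $\gamma(\emptyset)=0 = p(\emptyset)$. Next, downward closure, namely that $S' \subseteq S \in \mathcal{M}_{\ell}$ implies $S' \in \mathcal{M}_{\ell}$, follows directly from Lemma~\ref{lem:ind-is-monotone} applied inside $P^\ell$.

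The main obstacle, and the only nontrivial step, is the exchange axiom: given $S, T \in \mathcal{M}_{\ell}$ with $|S|<|T|$, I must find $i \in T \setminus S$ such that $S\cup\{i\} \in \mathcal{M}_{\ell}$. I plan to argue by contradiction. Suppose that for every $i \in T\setminus S$ the set $S\cup\{i\}$ is dependent. Since $i \in P^\ell$, the all-or-nothing property forces $\gamma(S\cup\{i\})-\gamma(S) \in \{0,p^\ell\}$; if it were $p^\ell$, then combined with $\gamma(S)=|S|\, p^\ell$ we would get $\gamma(S\cup\{i\})=(|S|+1)\, p^\ell = p(S\cup\{i\})$, making $S\cup\{i\}$ independent. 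Hence for every $i \in T\setminus S$, the marginal $\gamma(S\cup\{i\})-\gamma(S)$ equals $0$. Now I add the elements of $T\setminus S$ to $S$ one at a time in an arbitrary order; by submodularity, each marginal is bounded above by the corresponding marginal at $S$, which is $0$, and by monotonicity each marginal is nonnegative, so each is exactly $0$. Telescoping gives $\gamma(S\cup T) = \gamma(S) = |S|\, p^\ell$.

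On the other hand, since $T$ is independent and $\gamma$ is monotone,
\begin{equation*}
\gamma(S \cup T) \;\geq\; \gamma(T) \;=\; |T|\, p^\ell \;>\; |S|\, p^\ell,
\end{equation*}
where the strict inequality uses $|S|<|T|$ and $p^\ell > 0$. This contradicts $\gamma(S\cup T)=|S|\, p^\ell$, completing the proof of the exchange axiom. I expect the delicate point in writing this out to be making the telescoping step crisp: the all-or-nothing property is what converts a submodularity inequality into an equality (zero marginals), and it is this conversion that makes the argument work only because all items in $P^\ell$ have the common profit $p^\ell$.
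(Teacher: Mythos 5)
Your proof is correct and is essentially the same argument as in the paper: nonemptiness and downward closure follow from Lemma~\ref{lem:ind-is-monotone}, and the remaining axiom is established by contradiction, using the all-or-nothing property within a single profit class to force zero marginals, telescoping via submodularity to get $\gamma(S\cup T)=\gamma(S)$, and then contradicting $\gamma(S\cup T)\geq\gamma(T)=|T|p^\ell>|S|p^\ell$. The only cosmetic difference is that the paper verifies the equivalent axiom that all maximal independent subsets of any $A\subseteq P^\ell$ have equal cardinality, whereas you verify the exchange axiom directly.
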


The matroidal structure of independent sets in a profit class implies that the classical greedy algorithm for matroids can be employed to find, for each $\ell \in [k]$, an inclusionwise maximal independent set $P_I^\ell$ of $(P^\ell,{\cal M}_\ell)$ of minimum weight. For $\ell \in [k]$, fix one such $P_I^\ell$ if multiple exist. The next lemma shows an important property of chains contained in $P^\ell$. 

\begin{restatable}[Local optimality of chains contained in $P_I^\ell$]{lemma}{lemrestrictitem}\label{lem:restrict-item}
Given an independent chain ${\cal S}\subseteq P^\ell$, there exists an independent chain ${\cal S}'\subseteq P^\ell_I$ such that, for all $t \in [T]$, we have $\gamma(S_t) = \gamma(S'_t)$ and $w(S'_t) \leq w(S_t)$.
\end{restatable}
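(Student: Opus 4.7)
The plan is to construct ${\cal S}'$ by taking, at each time $t$, the cheapest prefix of $P_I^\ell$. Enumerate $P_I^\ell = \{b_1,\dots,b_r\}$ so that $w(b_1)\leq\dots\leq w(b_r)$, where $r$ is the rank of the matroid $(P^\ell,{\cal M}_\ell)$ from Lemma~\ref{cl:single_profit_matroid}, and set $S'_t := \{b_1,\dots,b_{|S_t|}\}$ for every $t\in[T]$. Since $S_t\in{\cal M}_\ell$ we have $|S_t|\leq r$, so $S'_t$ is well-defined and contained in $P_I^\ell$; the chain property $S'_1\subseteq\dots\subseteq S'_T$ follows from $|S_1|\leq\dots\leq|S_T|$, and Lemma~\ref{lem:ind-is-monotone} guarantees that every $S'_t$ is independent. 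Because all items in $P^\ell$ share the common profit $p^\ell$ and both $S_t,S'_t$ are independent with $|S'_t|=|S_t|$, we also obtain $\gamma(S'_t) = |S'_t|\,p^\ell = |S_t|\,p^\ell = \gamma(S_t)$.

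What remains is the weight inequality $w(S'_t)\leq w(S_t)$. I would reduce it to the matroid claim that for every $k\leq r$ and every $I\in{\cal M}_\ell$ with $|I|=k$, one has $w(\{b_1,\dots,b_k\})\leq w(I)$. After sorting the elements of $I$ as $i_1,\dots,i_k$ by weight, it suffices to prove the pointwise bound $w(b_j)\leq w(i_j)$ for every $j\in[k]$. I would argue by contradiction: let $j^*$ be the smallest failing index. The matroid augmentation axiom, applied to the independent sets $\{b_1,\dots,b_{j^*-1}\}$ and $\{i_1,\dots,i_{j^*}\}$, yields $i_\ell$ with $w(i_\ell)\leq w(i_{j^*})<w(b_{j^*})$ such that $\{b_1,\dots,b_{j^*-1}\}\cup\{i_\ell\}$ is independent. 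The strict inequality $w(i_\ell)<w(b_{j^*})$ rules out $i_\ell\in P_I^\ell$, so the fundamental circuit $C(i_\ell,P_I^\ell)$ is well-defined; moreover, it must contain some $b_m$ with $m\geq j^*$, since otherwise it would lie entirely in the independent set $\{b_1,\dots,b_{j^*-1}\}\cup\{i_\ell\}$. Then $(P_I^\ell\setminus\{b_m\})\cup\{i_\ell\}$ is a basis of $(P^\ell,{\cal M}_\ell)$ of weight strictly less than $w(P_I^\ell)$, contradicting the minimum-weight choice of $P_I^\ell$.

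The circuit-exchange step outlined above is where the real work lies; the rest is bookkeeping. Once the pointwise bound is established, summing $w(b_j)\leq w(i_j)$ over $j=1,\dots,|S_t|$ yields $w(S'_t)\leq w(S_t)$, and the lemma follows.
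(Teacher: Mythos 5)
Your construction of $\mathcal{S}'$ (take the cheapest $|S_t|$ elements of $P_I^\ell$) and the reductions to a chain, to independence, and to the profit equality are the same as the paper's. Where you diverge is in proving the weight inequality $w(S'_t)\leq w(S_t)$. The paper fixes a hypothetical $t$ with $w(S'_t)>w(S_t)$, picks the first index $h$ where the sorted weights of $S'_t$ exceed those of $S_t$, and then derives a contradiction by exhibiting two maximal independent subsets of $A = \{j_1,\dots,j_h\}\cup\{j'_1,\dots,j'_{h-1}\}$ of different sizes; crucially, it uses the \emph{greedy construction} of $P_I^\ell$ to show that $\{j'_1,\dots,j'_{h-1}\}\cup\{j\}$ is dependent for each $j$ with $w_j<w_{j'_h}$. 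You instead prove the clean intermediate fact that if $b_1,\dots,b_r$ is a minimum-weight basis sorted by weight, then $w(b_j)\leq w(i_j)$ pointwise against any independent set sorted by weight; your argument goes through augmentation, fundamental circuits, and basis exchange, and only uses that $P_I^\ell$ is a minimum-weight basis, not how it was computed. Both are correct. Your route is a bit more modular and does not commit to the greedy construction; the paper's route avoids invoking fundamental circuits and basis exchange, staying closer to the bare matroid axioms it has already set up in Lemma~\ref{cl:single_profit_matroid}. One small presentational gap: when you rule out $i_\ell\in P_I^\ell$, the weight inequality $w(i_\ell)<w(b_{j^*})$ only excludes $i_\ell\in\{b_{j^*},\dots,b_r\}$; you should note explicitly that $i_\ell\notin\{b_1,\dots,b_{j^*-1}\}$ follows from the augmentation axiom delivering $i_\ell$ outside that set, so the two facts together give $i_\ell\notin P_I^\ell$.
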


 By monotonicity of independence (Lemma~\ref{lem:ind-is-monotone}) and the independence of the union of independent slices  (Lemma~\ref{lem:union_of_ind}), any subset of $\cup_{\ell \in [k]}P_I^\ell$ is independent. So the restriction $\overline{\cal I}$ of ${\cal I}$ to items in $\cup_{\ell \in [k]}P_I^\ell$ is a  modularization of ${\cal I}$.  Using the optimality of independence (Lemma~\ref{lem:ind-opt}) and local optimality of chains contained in $P_I^\ell$ (Lemma~\ref{lem:restrict-item}), it is not hard to see that an optimal solution to $\overline {\cal I}$ is also an optimal solution to ${\cal I}$. We can therefore apply to $\overline {\cal I}$ the $\alpha$-approximation algorithm whose existence is guaranteed by the hypothesis of Theorem~\ref{thm:IK-AoN-approx}, and output the resulting solution. Our approach is summarized in Algorithm~\ref{alg:IIK-AoN}.

\begin{algorithm}
\caption{Algorithm for \texttt{IK-AoN}}
\DontPrintSemicolon
\KwIn{An \texttt{IK-AoN} instance with item set $[n]$ and profits $p_1,\dots,p_n$.}
\begin{algorithmic}[1]
\STATE Let $(P^1,\dots, P^k)$ be the profit partition of $[n]$. 
\FOR{$\ell \in [k]$} 
\STATE Compute an inclusionwise maximal independent set of minimum weight $P_I^\ell$ of the matroid $(P^\ell, {\cal M}_\ell)$.
\ENDFOR 
\STATE Run the $\alpha$-approximation algorithm on the \texttt{IK} instance $\overline{\cal I}$ with item set $\cup_{\ell \in [k]} P_I^\ell$, original weights and capacities,  and aggregation function $\gamma'(S)=\sum_{i \in S}p_i$ for $S\subseteq \cup_{\ell=1}^k P^\ell_I$, as to obtain the chain $\overline{\cal S}$.
\STATE Output $\overline{\cal S}$.
\end{algorithmic}\label{alg:IIK-AoN}
\end{algorithm}

\section{Independent sets}\label{sec:ind}

Fix an \texttt{IK-AoN} instance ${\cal I}$, defined as in Section~\ref{sec:intro}. To study independent sets, we first introduce some relevant concepts and properties, mostly extending analogous ones for matroids. 

\paragraph{Cycles, monotonicity.} We call a non-empty set $C \subseteq [n]$ a \emph{cycle} if $C$ is dependent and $C \setminus \{ i \}$ is independent for every $i \in C$. Cycles have the following interesting property.

\begin{lemma} \label{lem:gamma-of-cycle}
Let $C \subseteq [n]$ be a cycle. Then, for each $i \in C$, we have $\gamma(C)=\gamma(C\setminus \{i\})$. \end{lemma}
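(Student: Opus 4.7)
The plan is to exploit the dichotomy imposed by the all-or-nothing contribution together with the definition of a cycle, and show that the ``full profit'' branch is incompatible with $C$ being dependent. First I would fix an arbitrary $i \in C$ and set $S = C \setminus \{i\}$. Since $C$ is a cycle, $S$ is independent, hence by definition $\gamma(S) = p(S) = p(C) - p_i$.

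Next I would invoke the all-or-nothing property at $S$ and $i$: it gives $\gamma(C) - \gamma(C\setminus\{i\}) \in \{0, p_i\}$. Suppose, for contradiction, that this quantity equals $p_i$. Then $\gamma(C) = p(C\setminus\{i\}) + p_i = p(C)$. Now for every $j \in C$, the all-or-nothing property implies $\gamma(C) \le \gamma(C\setminus\{j\}) + p_j$, and iterating this element by element along any ordering of $C$ yields $\gamma(C) \le p(C)$; combined with equality $\gamma(C) = p(C)$, this says the full contribution is achieved, i.e., $C$ is independent. But this contradicts the hypothesis that $C$ is dependent. Therefore $\gamma(C) - \gamma(C\setminus\{i\}) = 0$, which is exactly the claim.

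I do not anticipate any serious obstacle: the argument is essentially a one-line use of the two axioms together with the definitions of independence and cycle. The only point to handle carefully is the clean verification that $\gamma(C) = p(C)$ forces $C$ to be independent, which follows from writing $\gamma(C)$ as a telescoping sum of marginals and observing that, by all-or-nothing, each marginal in $\{0,p_j\}$ must equal $p_j$ for the sum to reach $p(C)$.
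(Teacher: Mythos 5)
Your proof is correct and essentially identical to the paper's: apply the all-or-nothing dichotomy to the marginal $\gamma(C)-\gamma(C\setminus\{i\})$, note that the $p_i$ branch would give $\gamma(C)=p(C)$, hence $C$ independent, contradiction. One small remark: the concluding ``telescoping'' verification you flag as the delicate step is superfluous, because $\gamma(C)=p(C)$ is \emph{by definition} what it means for $C$ to be independent -- no further argument about individual marginals is needed.
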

\begin{proof}
Let $C,i$ be as in the hypothesis. By definition, $C\setminus \{i\}$ is independent, so $\gamma(C\setminus \{i\})=p(C\setminus \{i\})$. 
By definition, $\gamma(C)=\gamma(C\setminus \{i\})$ or $\gamma(C)=\gamma(C\setminus \{i\})+p_i=p(C)$. Since the latter would imply that $C$ is independent and contradict the hypothesis, the former holds.
\end{proof}

The next lemma shows that each dependent set contains a cycle.
 
\begin{lemma} \label{lem:cycle}
Let $S \subseteq [n]$ be dependent. Then there exists $C \subseteq S$ such that $C$ is a cycle.
\end{lemma}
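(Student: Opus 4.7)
The plan is to exhibit a cycle inside $S$ by a standard minimality argument, mirroring how one proves that any dependent set in a matroid contains a circuit. Specifically, I would consider the family
\[
\mathcal{F} = \{ S' \subseteq S : S' \text{ is dependent}\},
\]
which is non-empty since $S \in \mathcal{F}$ by hypothesis. Because $[n]$ is finite, $\mathcal{F}$ has an inclusionwise minimal element; call it $C$.

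Next I would verify that $C$ meets the definition of a cycle. Non-emptiness follows from the fact that $\gamma(\emptyset)=0=p(\emptyset)$, so $\emptyset$ is independent and hence cannot lie in $\mathcal{F}$. Dependence of $C$ holds by construction. Finally, for any $i \in C$, the set $C \setminus \{i\}$ is a proper subset of $C$ contained in $S$, so by minimality of $C$ in $\mathcal{F}$, $C\setminus\{i\}$ is not dependent, i.e., it is independent. Thus $C$ is a cycle contained in $S$, as required.

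There is essentially no obstacle here: the proof is a one-paragraph appeal to finiteness plus the definition of a cycle, analogous to the standard argument that every dependent set in a matroid contains a circuit. The only small thing to be mindful of is explicitly ruling out $C=\emptyset$, which is immediate from $\gamma(\emptyset)=0$.
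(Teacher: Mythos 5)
Your proof is correct and takes essentially the same approach as the paper: the paper iteratively removes elements from $S$ while the set remains dependent (a greedy construction), while you directly invoke finiteness to extract an inclusionwise minimal dependent subset; both yield a set whose single-element deletions are all independent, which is exactly the definition of a cycle.
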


\begin{proof}
Consider the algorithm that, starting from $C = S$, iteratively removes an item $i \in C$ while $C \setminus \{i \}$ is dependent, and then outputs the resulting set $C$.  We first claim that the set $C$ outputted by the procedure above is non-empty and dependent. Indeed, at the beginning of the algorithm, $C$ is dependent. Moreover, an item is only removed if it preserves the property of $C$ being dependent. Since the empty set is clearly independent, the algorithm halts with a non-empty set. By construction, at termination, $C \setminus \{ i \}$ is independent for all $i \in C$, showing that $C$ is a cycle. 
\end{proof}

Lemma~\ref{lem:ind-is-monotone}, restated here for the reader's convenience, shows that the property of being independent is monotone with respect to set inclusion.

\indismonotone*

\begin{proof}
By hypothesis, $\gamma(S)=p(S)$. Assume by contradiction that there exists $S'\subseteq S$, with $S'$ dependent. Take a cycle $C\subseteq S'$, whose existence is guaranteed by Lemma~\ref{lem:cycle}, and let $i \in C$. We have:
$$0= \gamma(C) - \gamma(C\setminus \{i\}) \geq \gamma(S)-\gamma(S\setminus \{i\}) = p_i,$$
where the first equality follows from Lemma~\ref{lem:gamma-of-cycle},  the inequality by submodularity, and the second equality by independence of $S$. Hence, $p_i \leq 0$, a contradiction to $p_i \in \mathbb{N}$. \end{proof}

\paragraph{Restriction to independent chains.} Recall that we say that a chain ${\cal S}=(S_1,\dots, S_T)$ is \emph{independent} if $S_1,\dots, S_T$ are independent. Using Lemma~\ref{lem:ind-is-monotone}, ${\cal S}$ is independent if and only if $S_T$ is independent. As we show next, in \texttt{IK-AoN} we can restrict our attention to independent chains.

\indopt*

\begin{proof}
Let ${\cal S}^* = (S^*_1, \dots, S^*_T)$ denote an optimal chain of ${\cal I}$. For every $i \in S^*_T$, let $t(i)$ be the insertion time of item $i$ with respect to ${\cal S}^*$. We first claim that, for each $i \in S^*_T$, we have without loss of generality that \begin{equation}\label{eq:t(i)} \gamma(S^*_{t(i)}) - \gamma(S^*_{t(i)} \setminus \{i \})=p_i.\end{equation} Indeed, suppose~\eqref{eq:t(i)} does not hold for some $i \in S^*_T$. We claim that  $\overline {\cal S} = (\overline S_1,\overline S_2,\dots,\overline S_T)$ with $\overline S_t = S^*_t \setminus \{i\}$ for $t \in [T]$ is also an optimal chain. Clearly, $\overline S$ is a feasible chain. Moreover, 
$$0=\gamma(S^*_{t(i)}) - \gamma(S^*_{t(i)} \setminus \{i \}) \geq \gamma(S^*_t) - \gamma(S^*_t \setminus \{i \}),$$ for every $t \geq t(i)$, where the equation holds by definition of $\gamma$ and the hypothesis that~\eqref{eq:t(i)} does not hold, and the inequality by submodularity and the definition of chain. Hence, $\gamma(S^*_t) = \gamma(S^*_t \setminus \{i \})$ for all $t \geq t(i)$ by monotonicity of $\gamma$. Since for $t \in [t(i)-1] $ we have $\overline S_t=S^*_t$, the claim follows. We therefore assume that~\eqref{eq:t(i)} holds for all $i \in S_T^*$.

By way of contradiction, suppose ${\cal S}^*$ is not independent. Thus, there exists some $t \in [T]$  and $i \in S^*_t$ such that $\gamma(S^*_t \setminus \{i \}) = \gamma(S^*_t)$. Let $\tau(i)$ be the smallest time $t \in [T]$ such that $i \in S^*_t$ and $\gamma(S^*_t \setminus \{i \})=\gamma(S^*_t)$. By~\eqref{eq:t(i)}, we know that $\tau(i) > t(i) \geq 1$. Furthermore, $S^*_{\tau(i)} \setminus S^*_{\tau(i)-1} \neq \emptyset$, else $$\gamma(S^*_{\tau(i)}) - \gamma(S^*_{\tau(i)} \setminus \{i \}) = \gamma(S^*_{\tau(i)-1}) - \gamma(S^*_{\tau(i)-1} \setminus \{i \}) = 0,$$ contradicting the choice of $\tau(i)$. 

Let therefore $S^*_{\tau(i)} \setminus S^*_{\tau(i)-1}=\{ j_1, \dots, j_h \}\neq \emptyset$. For $\ell \in [h]$, since the insertion time of $j_{\ell}$ is $\tau(i)$, we know by~\eqref{eq:t(i)} that $p_{j_{\ell}} = \gamma(S^*_{\tau(i)}) - \gamma(S^*_{\tau(i)} \setminus \{j_{\ell} \})$. Furthermore, 
$$    \gamma(S^*_{\tau(i)}) - \gamma(S^*_{\tau(i)} \setminus \{j_{\ell} \}) \leq \gamma(S^*_{\tau(i)-1} \cup \{j_1, \dots, j_{\ell} \}) - \gamma(S^*_{\tau(i)-1} \cup \{j_1, \dots, j_{\ell-1}\})\leq p_{j_{\ell}},$$ where the first inequality holds by submodularity (since $S^*_{\tau(i) - 1} \cup \{j_1, \dots, j_{\ell-1} \} \subseteq S^*_{\tau(i)} \setminus \{j_{\ell}\}$) and the second by the all-or-nothing property. Hence, 
\begin{equation} \label{eq:profit_1}
p_{j_{\ell}} = \gamma(S^*_{\tau(i)-1} \cup \{j_1, \dots, j_{\ell} \}) - \gamma(S^*_{\tau(i)-1} \cup \{j_1, \dots, j_{\ell-1}\}).
\end{equation}
Since $\tau(i)>t(i)$, we know $i \neq j_\ell$ for all $\ell \in [h]$. Hence, by employing a reasoning similar to the one that led to~\eqref{eq:profit_1}, we have:
\begin{equation} \label{eq:profit_2}
    p_{j_{\ell}} = \gamma(S^*_{\tau(i)-1} \setminus \{i \} \cup \{j_1, \dots ,j_{\ell} \}) - \gamma(S^*_{\tau(i)-1} \setminus \{i \} \cup \{j_1, \dots, j_{\ell-1}\}).
\end{equation}
We have therefore
\begin{equation}\label{eq:eq-1}\sum_{\ell = 1}^h p_{j_{\ell}} = \gamma(S^*_{\tau(i) -1} \cup \{j_1, \dots, j_h\}) - \gamma(S^*_{\tau(i)-1}) =  \gamma(S^*_{\tau(i)}) - \gamma(S^*_{\tau(i)-1}),
\end{equation}
where the first equality follows by  repeatedly replacing $p_{j_\ell}$ with the right-hand side of Equation~\eqref{eq:profit_1}, and then by telescoping cancellations, and the second equality follows by definition.

Similarly, summing~\eqref{eq:profit_2} for all $\ell \in [h]$, we have
\begin{equation}\label{eq:eq-2}\sum_{\ell = 1}^h p_{j_{\ell}} = \gamma(S^*_{\tau(i) -1} \setminus \{i \} \cup \{j_1, \dots, j_h\}) - \gamma(S^*_{\tau(i)-1}\setminus \{i \}) = \gamma(S^*_{\tau(i)}\setminus \{i \}) - \gamma(S^*_{\tau(i)-1}\setminus \{i \}).\end{equation}
Combining~\eqref{eq:eq-1} and~\eqref{eq:eq-2}, we have
$$\gamma(S^*_{\tau(i)}) - \gamma(S^*_{\tau(i)-1})  =  \gamma(S^*_{\tau(i)}\setminus \{i \}) - \gamma(S^*_{\tau(i)-1}\setminus \{i \})$$
and therefore, using the assumption that $0=\gamma(S^*_{\tau(i)}) - \gamma(S^*_{\tau(i)} \setminus \{i \})$, we deduce
$
\gamma(S^*_{\tau(i) - 1}) =  \gamma(S^*_{\tau(i) -1} \setminus \{i \}).
$ We have reached a contradiction since we assumed that $\tau(i)$ is the smallest time $t \in [T]$ such that $\gamma(S^*_{t} \setminus \{i \})=\gamma(S^*_{t})$.
\end{proof}

\paragraph{Union of independent slices.}

Recall that the \emph{profit partition} $(P^1,\dots, P^k)$ of $[n]$ is such that, for $\ell \in [k]$, all items in the \emph{profit class} $P^{\ell}$ have profit $p^{\ell}$, with $0 < p^1 < p^2 < \dots < p^k$. The next lemma shows that the independence of $S\subseteq [n]$ is completely determined by the independence of the restrictions of $S$ to each of the profit classes.

\eqprofitdependent*

\begin{proof}
The necessity condition follows immediately from Lemma~\ref{lem:ind-is-monotone}. 
For the sufficiency condition, we show the contrapositive. Thus, let $S$ be dependent. We show that there exists a dependent subset of $S$ whose elements all have the same profit. The proof is by induction on $|S|$. Let $|S| = 1$. Since we assume all sets of cardinality $1$ are independent (see Section~\ref{sec:intro:hlw}), the statement is vacuously true. Hence, let $|S|\geq 2$ and $C \subseteq S$ be a cycle, whose existence is guaranteed by Lemma~\ref{lem:cycle}. If $|C| < |S|$, then by inductive hypothesis, there exists $S' \subseteq C \subseteq S$ such that $S'$ is dependent and contains only items with equal profits.
Else, $S$ is a cycle. Hence, for $i \in S$, we have $$\gamma(S)=\gamma(S \setminus \{i \})=\sum_{j \in S \setminus \{i\}} p_j,$$ where the first equation holds by Lemma~\ref{lem:gamma-of-cycle} and the second by independence. Thus, for any $i,j \in S$,
$$\sum_{r\in S \setminus \{i \}} p_{r} = \gamma(S)=  \sum_{r \in S \setminus \{j \}} p_r.$$
Cancelling out all $p_r$ for $r \notin \{i, j \}$ in the equality above, we get $p_i = p_j$. Since $i,j$ were chosen arbitrarily from $S$, all items from $S$ have the same profit, concluding the proof. \end{proof}

\section{Independent sets in single profit classes}\label{sec:profit-decomp}

Fix again an \texttt{IK-AoN} instance ${\cal I}$, defined as in Section~\ref{sec:intro}, and let $(P^1,\dots, P^k)$ be the profit partition of $[n]$. 

\paragraph{Slicing by profit.} 

The next lemma gives an important property of independent sets contained in a single profit class. Recall that, for any $\ell \in [k]$, we let ${\cal M}_{\ell} \subseteq 2^{P^{\ell}}$ denote the family of independent sets of $P^{\ell}$. 

\clsingleprofitmatroid*

\begin{proof}
Fix $\ell \in [k]$. Trivially, $\emptyset \in {\cal M}_{\ell}$. For any set $S \in {\cal M}_{\ell}$ and any $S' \subseteq S$, if $S$ is independent, so is $S'$ by Lemma~\ref{lem:ind-is-monotone}. To conclude that $(P^\ell,{\cal M}_\ell)$ is a matroid, pick $A \subseteq P^{\ell}$ and two inclusionwise maximal independent sets $S, S' \subseteq A$. It suffices to show that $|S| = |S'|$ (see, e.g.,~\cite[Exercise 3 on page 14]{oxley2006matroid}). By way of contradiction, without loss of generality, assume $|S| > |S'|$. By monotonicity, and since we know that both $S$ and $S'$ are independent sets whose all items have profit $p^\ell$, we have
\begin{equation}\label{eq:matroid}\gamma(S \cup S') \geq  \gamma(S) = |S|\cdot p^{\ell} > |S'| \cdot p^{\ell}= \gamma(S').\end{equation}
Let $S \setminus S'=\{j_1, \dots, j_h \}$. We claim that there exists $\ell\in [h]$ such that $S' \cup \{j_\ell\}$ is independent, a contradiction to the choice of $S'$. Indeed, if for each $\ell \in [h]$ we have that $S'\cup \{j_\ell\}$ is dependent, then $\gamma(S'\cup \{j_\ell\})=\gamma(S')$ by definition. Hence, by submodularity, 
$$
0 = \gamma(S'\cup \{j_\ell\})- \gamma(S') \geq \gamma(S' \cup \{j_1,\dots,j_\ell\}) - \gamma(S' \cup \{j_1,\dots,j_{\ell-1}\}),
$$
Thus, by telescoping sum
$$
0 \geq \gamma(S' \cup \{j_1,\dots,j_h\})-\gamma(S')=\gamma(S' \cup S) - \gamma(S'),
$$
contradicting~\eqref{eq:matroid}. \end{proof} 

\paragraph{Restriction to independent sets of minimum weight.} 

Fix $\ell \in [k]$. Let $P_I^\ell$ be an inclusionwise maximal independent set of minimum weight from the matroid $(P^\ell,{\cal M}_\ell)$. 
Note that since $P_I^{\ell}$ is an independent set, all subsets of $P_I^{\ell}$ are also independent by Lemma~\ref{lem:ind-is-monotone}.

\lemrestrictitem*

\begin{proof}
We construct ${\cal S'}=(S_1',S_2',\dots,S_T')$ as follows. For every $t \in [T]$, let $S'_t$ be the $|S_t|$ items of minimum weight in $P_I^{\ell}$, breaking ties by choosing smaller index items. To show that ${\cal S'}$ is well-defined, observe that $S_T\subseteq P^\ell$ is independent by hypothesis, and $P_I^{\ell}$ is by definition inclusionwise maximal among independent sets contained in $P^\ell$. By basic matroid properties, we have $|S_T| \leq |P_I^{\ell}|$. 

Since ${\cal S}$ is a chain, ${\cal S}'$ is also a chain.  ${\cal S}$ is independent by hypothesis, and ${\cal S}'$ is independent by Lemma~\ref{lem:ind-is-monotone} since ${\cal S}'\subseteq P_I^\ell$ and $P_I^\ell$ is independent. Since all items in ${P}^{\ell}$ have equal profit, we have $\gamma(S_t) = \gamma(S'_t)$ for all $t \in [T]$.

We are left to show $w(S_t')\leq w(S_t)$ for all $t \in [T]$. Suppose by contradiction there exists $t \in [T]$ such that $w(S'_t) > w(S_t)$. Let $S'_t = \{j'_1, \dots, j'_m \}$ such that $w_{j'_1} \leq \dots \leq w_{j'_m}$. Similarly, let $S_t = \{j_1, \dots, j_m \}$ be such that $w_{j_1} \leq \dots \leq w_{j_m}$ (recall that $|S'_t|=|S_t|$ by construction). Let $h$ be the smallest index such that $w_{j'_h} > w_{j_h}$. The existence of $j'_h$ follows from $w(S'_t) > w(S_t)$ and $|S_t|=|S'_t|$. Let $J=\{ j_1, \dots j_h \}$, $J'=\{j'_1, \dots, j'_{h-1} \}$, $A = J \cup J'$.

  We claim that $J'$ is an inclusionwise maximal independent set contained in $A$. Since $J'\subseteq S'_t$ and we argued above that ${\cal S}'$ is independent, we can use Lemma~\ref{lem:ind-is-monotone} to conclude that $J'$ is independent. Let us now argue inclusionwise maximality. By definition, $w_j \leq w_{j_h}< w_{j'_h}$ for all $j \in J$. Fix $j \in J \setminus J'$. Since $j'_h \in S'_t$ and $j \notin J'$, by definition of $S'_t$ we have that $j  \notin P_I^{\ell}$. Consider therefore the step in the construction of $P^{\ell}_I$ when $j'_h$ is added to $P^\ell_I$. Since $j$ is not added, we have that $J' \cup \{j \}$ is dependent. Hence, $J'$ is an inclusionwise maximal independent set of $A$. 
  
  Now, $J\subseteq S_t$ is independent by hypothesis and Lemma~\ref{lem:ind-is-monotone}. Moreover, $|J|>|J'|$, a contradiction since $(P^{\ell}, {\cal M}_{\ell})$ forms a matroid by Lemma~\ref{cl:single_profit_matroid}. Hence, it must be that $w(S'_t) \leq w(S_t)$ for all $t \in [T]$, concluding the proof.
\end{proof}

\section{Proof of Theorem~\ref{thm:IK-AoN-approx}}\label{sec:wrap-up}

\paragraph{Well-definedness of Algorithm~\ref{alg:IIK-AoN}.} Let ${\cal I}$, $\overline{\cal I}$ be as in Algorithm~\ref{alg:IIK-AoN}. Notice that $\overline {\cal I}$ is a modularization of ${\cal I}$. Hence, we can apply the $\alpha$-approximation algorithm whose existence is guaranteed by the hypothesis of Theorem~\ref{thm:IK-AoN-approx}. Note that $\overline{\cal S}$, as output by the algorithm, is a feasible chain of both $\overline{\cal I}$ (by construction) and ${\cal I}$ (since modularization does not affect feasibility). Hence, Algorithm~\ref{alg:IIK-AoN} outputs a feasible solution to ${\cal I}$.

\paragraph{Approximation guarantee of Algorithm~\ref{alg:IIK-AoN}.} Given any chain ${\cal S}=(S_1,\dots,S_T)\subseteq [n]$, let $\objfunc({\cal S}) = \sum_{t \in T} \Delta_t \cdot \gamma(S_t)$ denote its profit in the instance ${\cal I}$.  If moreover ${\cal S}\subseteq \cup_{\ell \in [k]}P^\ell_I$, let $\overline\objfunc({\cal S})$ denote the profit chain ${\cal S}$ earns in the instance $\overline {\cal I}$. Since $\overline {\cal I}$ is an \texttt{IK} instance, $\overline \objfunc({\cal S}) = \sum_{t \in T} \Delta_t \sum_{i \in S_t} p_i$. 
The fact that the chain $\overline {\cal S}$ output by the algorithm is an $\alpha$-approximated solution to ${\cal I}$ immediately follows from the next lemma. 

\begin{lemma} \label{lem:ik-reduction}
Let ${\cal S}$ be an optimal solution of ${\cal I}$ and let ${\cal S}'$ be an optimal solution of $\overline {\cal I}$. Then $\overline\objfunc({\cal S}') \geq \objfunc({\cal S})$. Furthermore, given any solution $\hat {\cal S}$ feasible for $\overline{\cal I}$, then $\hat {\cal S}$ is feasible for ${\cal I}$ and $\objfunc(\hat {\cal S}) = \overline \objfunc(\hat {\cal S})$.
\end{lemma}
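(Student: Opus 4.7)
The plan is to prove the second assertion first (since it is essentially a sanity check) and then use it, together with the structural results of Sections~\ref{sec:ind} and~\ref{sec:profit-decomp}, to deduce the first.

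For the second assertion, I would observe that $\overline{\cal I}$ has the same weights, capacities, and coefficients as ${\cal I}$, and its item set $\cup_{\ell \in [k]} P^\ell_I$ is a subset of $[n]$, so any chain feasible for $\overline{\cal I}$ is automatically feasible for ${\cal I}$. For the profit equality, I would note that $\cup_{\ell \in [k]} P_I^\ell$ is independent: each $P_I^\ell$ is independent by construction, and by Lemma~\ref{lem:union_of_ind} the union is also independent. Since $\hat{\cal S} \subseteq \cup_\ell P_I^\ell$, Lemma~\ref{lem:ind-is-monotone} gives that each $\hat S_t$ is independent, hence $\gamma(\hat S_t) = \sum_{i \in \hat S_t} p_i$ and consequently $\objfunc(\hat {\cal S}) = \overline\objfunc(\hat {\cal S})$.

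For the first assertion, I would build from ${\cal S}$ a chain $\tilde{\cal S}$ that is feasible for $\overline{\cal I}$ and satisfies $\objfunc(\tilde{\cal S}) = \objfunc({\cal S})$. By Lemma~\ref{lem:ind-opt}, I may assume ${\cal S}$ is independent. For each $\ell \in [k]$, the sliced chain $(S_t \cap P^\ell)_{t \in [T]}$ is a chain contained in $P^\ell$, and each $S_t \cap P^\ell$ is independent by Lemma~\ref{lem:ind-is-monotone}. I would then invoke Lemma~\ref{lem:restrict-item} on each slice to obtain an independent chain $({S'}_t^\ell)_{t \in [T]} \subseteq P_I^\ell$ satisfying $\gamma({S'}_t^\ell) = \gamma(S_t \cap P^\ell)$ and $w({S'}_t^\ell) \leq w(S_t \cap P^\ell)$ for every $t$. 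Setting $\tilde S_t := \bigcup_{\ell \in [k]} {S'}_t^\ell$ produces a chain because the union of nested families on disjoint ground sets (the profit classes are disjoint by definition) remains nested. Since $\tilde S_t \subseteq \cup_\ell P_I^\ell$ is independent, both $\gamma$-value and weight decompose additively across the slices, giving
\[
\gamma(\tilde S_t) \;=\; \sum_{\ell \in [k]} \gamma({S'}_t^\ell) \;=\; \sum_{\ell \in [k]} \gamma(S_t \cap P^\ell) \;=\; \gamma(S_t),
\]
where the last equality uses the independence of $S_t$ combined with Lemma~\ref{lem:union_of_ind}. Similarly $w(\tilde S_t) \leq w(S_t) \leq W_t$, so $\tilde{\cal S}$ is feasible for ${\cal I}$, and it is feasible for $\overline{\cal I}$ by construction. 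Applying the already-proved second assertion to $\tilde{\cal S}$ yields $\overline\objfunc(\tilde{\cal S}) = \objfunc(\tilde{\cal S}) = \objfunc({\cal S})$, and optimality of ${\cal S}'$ for $\overline{\cal I}$ finally gives $\overline\objfunc({\cal S}') \geq \overline\objfunc(\tilde{\cal S}) = \objfunc({\cal S})$.

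I do not anticipate a serious obstacle; the work is in carefully assembling the structural lemmas. The only point that warrants care is the additive decomposition step: one must remember that $\gamma$ is modular on independent sets, and that independence of $\tilde S_t$ follows from independence of $\cup_\ell P_I^\ell$ via Lemma~\ref{lem:ind-is-monotone}, so all sums over $\ell$ are legitimate.
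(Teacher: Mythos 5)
Your proof is correct and follows essentially the same route as the paper: decompose the optimal chain ${\cal S}$ by profit class, apply Lemma~\ref{lem:restrict-item} slice by slice, reassemble into $\tilde{\cal S} \subseteq \cup_\ell P_I^\ell$, and compare profits via modularity on independent sets. Proving the second assertion first and then invoking it to finish the first is a minor organizational tidying, not a substantively different argument.
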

\begin{proof}
Let us start with the first statement. By Lemma~\ref{lem:ind-opt}, we can assume without loss of generality that ${\cal S}$ is independent. We decompose ${\cal S}=(S_1,\dots,S_T)$ into $k$ separate chains, with each chain containing only items in $P^{\ell}$ for some $\ell \in [k]$. More precisely, for each $\ell \in [k]$, let ${\cal S}^\ell = (S_1 \cap P^{\ell}, \dots, S_T \cap P^{\ell}).$ One easily verifies that ${\cal S}^\ell$ is indeed a chain for each $\ell \in [k]$. Moreover, since ${\cal S}$ is independent, ${\cal S}^{\ell}$ is also independent by Lemma~\ref{lem:union_of_ind}.
It follows that
\begin{equation} \label{eq:opt-profit}
\objfunc({\cal S}) = \sum_{t \in [T]} \Delta_t \cdot \gamma(S_t) = \sum_{t \in [T]} \Delta_t \cdot p(S_t) =  \sum_{t \in [T]} \Delta_t \cdot \sum_{\ell \in [k]} p(S^{\ell}_t)=  \sum_{t \in [T]} \Delta_t \cdot \sum_{\ell \in [k]} \gamma(S^{\ell}_t).
\end{equation}
By Lemma~\ref{lem:restrict-item}, for each $\ell \in [k]$ there exists an independent chain of ${\cal I}$, call it $\tilde {\cal S}^{\ell}=(\tilde S^\ell_1,\dots, \tilde S^\ell_T)$, such that $\tilde {\cal S}^{\ell}\subseteq P_I^{\ell}$, $\gamma(\tilde{S}^{\ell}_t) = \gamma(S^{\ell}_t)$, and $w(\tilde{S}^{\ell}_t) \leq w(S^{\ell}_t)$ for $t \in [T]$.

Let ${\tilde {\cal S}}=(\tilde S_1,\dots,\tilde S_T)$ be defined as $\tilde S_t = \cup_{\ell \in [k]} \tilde S^\ell_t$ for $t \in [T]$. Since, for $\ell \in [k]$, we have that $\tilde {\cal S}^\ell$ is a chain of ${\cal I}$ contained in $P_I^\ell$, then $\tilde {\cal S}$ is a chain of $\overline {\cal I}$. To see that ${\tilde {\cal S}}$ is feasible for $\overline {\cal I}$, note that for each $t \in [T]$,
$$w(\tilde {S}_t) = \sum_{\ell \in [k]} w(\tilde {S}^\ell_t) \leq \sum_{\ell \in [k]} w(S^\ell_t) = w(S_t) \leq W_t,$$
where the first inequality follows by Lemma~\ref{lem:restrict-item}, and the final inequality follows by feasibility of ${\cal S}$ in ${\cal I}$. Moreover, $\tilde {\cal S}$ is independent in ${\cal I}$  by Lemma~\ref{lem:union_of_ind}. Hence,
\begin{eqnarray*}
\overline\objfunc(\tilde{\cal S}) & = & \sum_{t \in [T]} \Delta_t \sum_{\ell \in [k]} \sum_{i \in {\tilde S}_t \cap P_I^{\ell}} p_i \\
& = & \sum_{t \in [T]} \Delta_t \sum_{\ell \in [k]} \gamma(\tilde{S}^{\ell}_t) \\
& = & \sum_{t \in [T]} \Delta_t \sum_{\ell \in [k]}\gamma(S^{\ell}_t) \\
& = & \objfunc({\cal S}),
\end{eqnarray*}
where the first equality follows by definition, the second since, as argued above, $\tilde {\cal S}$ is independent in ${\cal I}$, the third by Lemma~\ref{lem:restrict-item}, and the final equality by~\eqref{eq:opt-profit}.
Following the analysis above, $\tilde {\cal S}$ is feasible in $\overline {\cal I}$ with $\overline \objfunc(\tilde{\cal S}) = \objfunc({\cal S})$. It follows that the optimal chain ${\cal S}'$ of $\overline{\cal I}$ satisfies $\overline\objfunc({\cal S}') \geq \objfunc({\cal S})$. This concludes the proof of the first part of the lemma.

To prove the second part of the lemma, let $\hat {\cal S}$ be a feasible solution of $\overline{\cal I}$. Then it is clearly feasible for ${\cal I}$. To show that $\objfunc(\hat {\cal S}) = \overline\objfunc(\hat{\cal S})$, it suffices to show that $\hat {\cal S}$ is an independent chain in ${\cal I}$. We again decompose $\hat {\cal S}$ by profit into $k$ separate chains $\hat {\cal S}^1,\dots, \hat{\cal S}^k$ as done in the first part of the proof. 
For $\ell \in [k]$, using the fact that $P_I^{\ell}$ is independent by construction, and Lemma~\ref{lem:union_of_ind}, we have that $\hat {\cal S}^\ell$ is independent. Using again Lemma~\ref{lem:union_of_ind}, $\hat {\cal S}$ is independent in ${\cal I}$. \end{proof}

\paragraph{Analysis of the running time.} $P^1,\dots, P^k$ can be created by ordering the items by profit, which takes time $\Theta(n \log n)$. $P^1_I,\dots, P^k_I$ can be constructed by $O(n)$ calls to the evaluation oracle for $\gamma$. Running the $\alpha$-approximation algorithm on $\overline {\cal I}$ takes at most $q(|{\cal I}|)$ time.

\section{Which submodular incremental knapsack problems are hard to approximate?} \label{sec:hardness}

In this section, we prove Theorem~\ref{thm:main:hard}, that we restate here for convenience.

\mainhard*

\begin{proof}
We show that the problem is APX-hard through a reduction from the max $k$-vertex cover on subcubic graphs. 

In the max $k$-vertex cover on a subcubic graph problem, we are given a subcubic graph $G=(V,E)$, i.e., such that each $v \in V $ has degree $d(v)\leq 3$, and a positive integer $k$.   The goal is to find a subset $V' \subseteq V$ where $|V'| \leq k$ such that the number of edges that $V'$ covers is maximized. Unless P = NP, there does not exist a PTAS for the max $k$-vertex cover on subcubic graphs, see~\cite{petrank1994hardness}.

Let $V=\{v_1,\dots,v_n\}$. For $v_i \in V$, we create an item $i \in [n]$, with $w_i = 1$. Additionally, let $T = 1$, $W_1 = k$ and $\Delta_1 = 1$. For every $S \subseteq [n]$, let $v(S) = \{v_i \in V: i \in S\}$. Let $\gamma(S)$ denote the number of edges $v(S)$ covers. Notice that $\gamma(\emptyset)=0$, and $\gamma$ can clearly be evaluated in time polynomial in the input size. We show that $\gamma$ is a monotone, submodular function that satisfies the $\{0,1,2,3\}$-Contribution property: 
\begin{itemize}
\item \emph{Monotone Submodularity}:
For every $S \subseteq T \subseteq [n]$ and every $i \in [n]$, we have $\gamma(S \cup \{ i \}) - \gamma(S) \geq \gamma(T \cup \{ i \}) - \gamma(T)$, since $v(T)$ covers every edge that $v(S)$ covers.  
\item \emph{$\{0,1,2,3\}$-Contribution}: By monotonicity, for $i \in [n]$ and $S\subseteq [n]$, $\gamma(S \cup \{i\}) - \gamma(S)$ is a nonnegative integer. Since $G$ is a subcubic graph, $\gamma(S \cup \{i\}) - \gamma(S) \leq d(v_i) \leq 3$. 
\end{itemize}

It is easy to see that for every solution $S\subseteq \{v_1,\dots,v_n\}$ to the \texttt{IK}-$\{0,1,2,3\}$ instance above, $v(S)$ is a feasible solution for the max $k$-vertex cover covering exactly $\gamma(S)$ edges. Vice-versa, every $k$-vertex cover $v(S)$ covering $q$ edges corresponds to a feasible solution $S$ for the \texttt{IK}-$\{0,1,2,3\}$ instance above with $\gamma(S)=q$.

Suppose by contradiction that there is a PTAS solving the \texttt{IK}-$\{0,1,2,3\}$ instances created as above from the $\max$ $k$-vertex cover instances.
Let $S^*$ be the optimal solution to the problem and let $\bar{S}$ be the $(1- \epsilon)$-approximation of $S^*$ as obtained by the PTAS. Hence, $v(S^*)$ is an optimal solution to the max $k$-vertex cover instance, and $v(\bar{S})$ covers at least an $(1 - \epsilon)$ fraction of the vertices covered by $v(S^*)$. Since $\bar{S}$ is obtainable in polynomial time, so is $v(\bar{S})$. Thus, we have obtained a PTAS for the max $k$-vertex cover on a cubic graph problem, deducing the required contradiction and concluding the proof. \end{proof}

As the reader has probably already remarked, the instance created in the proof of Theorem~\ref{thm:main:hard} is a classical submodular function maximization problem under knapsack constraint, i.e., \texttt{IK}-$\{0,1,2,3\}$ is APX-hard already when $T=1$.

\medskip 

\noindent {\bf Acknowledgments.} Yuri Faenza and Lingyi Zhang acknowledge support by the ONR grant N00014-20-1-2091. Most of this work was conducted when Lingyi Zhang was a Ph.D.~student, and Federico D'Onofrio a visiting student, at the IEOR department, Columbia University.

\bibliographystyle{plainnat}
\bibliography{IKmain}

\begin{thebibliography}{12}
\providecommand{\natexlab}[1]{#1}
\providecommand{\url}[1]{\texttt{#1}}
\expandafter\ifx\csname urlstyle\endcsname\relax
  \providecommand{\doi}[1]{doi: #1}\else
  \providecommand{\doi}{doi: \begingroup \urlstyle{rm}\Url}\fi

\bibitem[Aouad and Segev(2020)]{aouad2020}
Ali Aouad and Danny Segev.
\newblock An approximate dynamic programming approach to the incremental
  knapsack problem, 2020.
\newblock Operations Research (forthcoming).

\bibitem[Bienstock et~al.(2013)Bienstock, Sethuraman, and
  Ye]{bienstock2013approximation}
Daniel Bienstock, Jay Sethuraman, and Chun Ye.
\newblock Approximation algorithms for the incremental knapsack problem via
  disjunctive programming, 2013.
\newblock arXiv preprint arXiv:1311.4563.

\bibitem[Cook et~al.(2011)Cook, Cunningham, Pulleyblank, and
  Schrijver]{cook2011combinatorial}
W.J. Cook, W.H. Cunningham, W.R. Pulleyblank, and A.~Schrijver.
\newblock \emph{Combinatorial Optimization}.
\newblock Wiley Series in Discrete Mathematics and Optimization. Wiley, 2011.
\newblock ISBN 9781118031391.

\bibitem[Della~Croce et~al.(2018)Della~Croce, Pferschy, and
  Scatamacchia]{della2018approximating}
Federico Della~Croce, Ulrich Pferschy, and Rosario Scatamacchia.
\newblock Approximating the 3-period incremental knapsack problem.
\newblock \emph{Journal of Discrete Algorithms}, 52:\penalty0 55--69, 2018.

\bibitem[Della~Croce et~al.(2019)Della~Croce, Pferschy, and
  Scatamacchia]{della2019approximating}
Federico Della~Croce, Ulrich Pferschy, and Rosario Scatamacchia.
\newblock On approximating the incremental knapsack problem.
\newblock \emph{Discrete Applied Mathematics}, 264:\penalty0 26--42, 2019.

\bibitem[Faenza and Malinovic(2018)]{Faenza2018APF}
Yuri Faenza and Igor Malinovic.
\newblock A {PTAS} for the time-invariant incremental knapsack problem.
\newblock In \emph{Proceedings of the 5th International Symposium on
  Combinatorial Optimization}, pages 157--169, 2018.

\bibitem[Faenza et~al.(2022)Faenza, Segev, and Zhang]{faenza2020approximation}
Yuri Faenza, Danny Segev, and Lingyi Zhang.
\newblock Approximation algorithms for the generalized incremental knapsack
  problem.
\newblock \emph{Mathematical Programming}, 2022.

\bibitem[Feige(1998)]{feige1998threshold}
Uriel Feige.
\newblock A threshold of $ln n$ for approximating set cover.
\newblock \emph{Journal of the ACM (JACM)}, 45\penalty0 (4):\penalty0 634--652,
  1998.

\bibitem[Oxley(2006)]{oxley2006matroid}
James~G Oxley.
\newblock \emph{Matroid theory}, volume~3.
\newblock Oxford University Press, USA, 2006.

\bibitem[Petrank(1994)]{petrank1994hardness}
Erez Petrank.
\newblock The hardness of approximation: Gap location.
\newblock \emph{Computational complexity}, 4\penalty0 (2):\penalty0 133--157,
  1994.

\bibitem[Sviridenko(2004)]{Sviridenko2004}
Maxim Sviridenko.
\newblock A note on maximizing a submodular set function subject to a knapsack
  constraint.
\newblock \emph{Operations Research Letters}, 32\penalty0 (1):\penalty0 41--43,
  2004.
\newblock ISSN 0167-6377.

\bibitem[Zhang(2022)]{zhang2022incremental}
Lingyi Zhang.
\newblock \emph{Incremental Packing Problems: Algorithms and Polyhedra}.
\newblock PhD thesis, Columbia University, 2022.

\end{thebibliography}
\end{document}